\documentclass[oneside, 11pt, reqno]{amsart}
\usepackage[left=1in, right=1in, top=1in, bottom=1in]{geometry}

\usepackage[usenames,dvipsnames]{xcolor}
\usepackage{amsmath}
\DeclareMathOperator*{\argmax}{argmax}
\usepackage{graphicx,enumitem}
\usepackage{mathtools}
\usepackage{amssymb} 
\usepackage{soul}
\usepackage[authoryear,round]{natbib}
\setcitestyle{authoryear,round}

\usepackage{booktabs, multirow}

\usepackage[margin=1in]{caption}
\allowdisplaybreaks

\usepackage[margin=0.7cm]{caption}
\usepackage{multirow}

\newcommand{\Zitkovic}[1]{{\v Z}itkovi\'c}
\newcommand{\Sirbu}[1]{S\^\i rbu}

\newcommand\tv{{\tilde{v}}}

\numberwithin{equation}{section}
\theoremstyle{plain}                
\newtheorem{theorem}{Theorem}

\theoremstyle{definition}           
\newtheorem{definition}[theorem]{Definition}
\newtheorem{proposition}[theorem]{Proposition}

\theoremstyle{remark}

\newcommand{\thmref}[1]{Theorem~\ref{#1}}
\newcommand{\proref}[1]{Proposition~\ref{#1}}

\usepackage{graphicx} 
\usepackage{xcolor}
\usepackage{setspace}

\newcommand\bv{{\bar{v}}}

\usepackage{hyperref}

\begin{document}

\author{Seongjin Kim$^\dag$ and Jin Hyuk Choi$^\ddag$}

\thanks{
$^\dag$Sungkyunkwan University (seongjinkim@skku.edu).\\
$^\ddag$Ulsan National Institute of Science and Technology (jchoi@unist.ac.kr).
}

\title{Mandatory Disclosure in Oligopolistic Market Making}

\begin{abstract} 
We develop a multi-period Kyle-type model that incorporates both mandatory disclosure of informed trades and imperfect competition among market makers. We prove the existence and uniqueness of a linear equilibrium and show that the liquidity-enhancing effect of disclosure is fundamentally linked to the degree of market-making competition. Disclosure lowers trading costs by reducing price impact, and its marginal benefit is strictly larger when competition is weak. We empirically validate this prediction using the 2002 Sarbanes--Oxley Act disclosure reform as a natural experiment. A difference-in-differences analysis of U.S. equities confirms that the spread reduction following enhanced disclosure is significantly larger for stocks with fewer active market makers.
\end{abstract}
\maketitle

\noindent\textbf{Key words.} Kyle model, Market microstructure, Mandatory disclosure, Market-maker competition, Price impact

\medskip
\noindent\textbf{JEL Classification:} G12, G14, D43

\section{Introduction}

Since the seminal work of \cite{K1985}, Kyle-type models have provided the standard framework for analyzing how information is incorporated into prices through strategic trading. This framework is widely used to evaluate the effects of mandatory disclosure, a fundamental component of financial regulation whose actual impact on market quality remains unclear. As emphasized by \cite{LW2016} and \cite{CHL2016}, the capital-market effects of transparency reforms are highly heterogeneous and contingent on institutional settings, often making empirical results difficult to interpret without a unifying theoretical lens.

Various studies have extended the Kyle framework to analyze disclosure, as seen in \cite{HHL2001}, \cite{Z2004}, \cite{GL2012}, and \cite{AMTY2015}, yet these studies rely on the assumption of perfectly competitive market makers. Conversely, a separate line of research, including \cite{B2001}, \cite{IN2013}, and \cite{Choi2025}, investigates the implications of imperfectly competitive market makers but omits the role of disclosure. We bridge these two strands by developing a model that incorporates both mandatory disclosure and oligopolistic market-making competition. Our framework builds on the disclosure mechanism of \cite{HHL2001} and the market-making structure in the spirit of \cite{Choi2025}. We prove the existence and uniqueness of a linear equilibrium and characterize how disclosure interacts with the degree of competition to shape key market outcomes, such as price impact and transaction price autocorrelation. In particular, while both literatures independently recognize the roles of disclosure and competition in shaping liquidity, neither addresses whether the benefit of disclosure varies systematically with the degree of competition. We show that it does.

Our central finding is that the liquidity-enhancing effect of disclosure is fundamentally linked to the intensity of market-making competition. While disclosure consistently reduces price impact, we show that its marginal benefit is strictly larger when competition among market makers is weak. Although this benefit diminishes as the market approaches perfect competition, it remains an important driver of liquidity even in highly competitive environments.

We test this prediction empirically using the enactment of Section~403 of the Sarbanes--Oxley Act (SOX) on August~29, 2002, which abruptly shortened insider-trading disclosure deadlines from ten calendar days to two business days. Employing a difference-in-differences design with daily stock-level data from CRSP, we find that the post-SOX reduction in bid--ask spreads is significantly larger for stocks covered by fewer market makers, with a positive and highly significant interaction coefficient. Cross-country evidence from \cite{FGH2006} and \cite{LLM2012}, discussed in the Online Appendix, provides complementary support.

\section{Model and Equilibrium}

We consider a discrete-time trading model over the finite horizon $[0,1]$.
Trading occurs at equally spaced dates $0 = t_0 < t_1 < \cdots < t_N = 1$, with $\Delta t = 1/N$ and $t_n = n/N$, resulting in $N$ trading rounds. A single risky asset is traded; its terminal value $\tv$ is normally distributed as $\mathcal N(0,\sigma_v^2)$ with $\sigma_v^2>0$.
The realization of $\tilde v$ becomes common knowledge only after the final trading round at $t_N=1$.
There are three types of market participants in the economy:
\begin{itemize}
    \item \textbf{Noise traders.}
    Noise traders submit non-strategic market orders.
    Let $\Delta u_n$ denote the aggregate noise order at trading round $n$.
    The sequence $\{\Delta u_n\}_{n=1}^N$ is assumed to be i.i.d. and normally distributed, $\Delta u_n \sim \mathcal N(0,\sigma_u^2\Delta t)$, with $\sigma_u^2>0$.
    These orders are independent of the asset fundamental $\tilde v$.

    \item \textbf{Informed trader.}
    The informed trader observes the realization of $\tilde v$ at $t_0=0$ and retains this information throughout the trading horizon.
    Let $x_n$ be the cumulative order submitted by the informed trader up to trading round $n$, and $\Delta x_n := x_n - x_{n-1}$ the market order submitted at round $n$.
    
    We assume a disclosure regime where these informed orders ($\Delta x_n$) are publicly revealed after each trading round. Consequently, the history of informed trades $\{\Delta x_1,\ldots,\Delta x_{n-1}\}$ is common knowledge at time $t_n$.
    As shown in \cite{HHL2001}, invertible trading strategies under such disclosure would lead to full revelation of private information and infinite profits; thus, they cannot arise in equilibrium. 
    To prevent full revelation, the informed trader adds a noise component to the order, a mechanism known as dissimulation.
    Specifically, the informed trader submits mixed strategies of the form
    \[
    \Delta x_n = \text{(informational component)} + \Delta z_n,
    \]
    where $\{\Delta z_n\}_{n=1}^N$ is a sequence of independent Gaussian random variables satisfying $\Delta z_n \sim \mathcal N(0,\sigma_{z,n}^2\Delta t)$.
    The dissimulation noise $\Delta z_n$ is independent of both $\tilde v$ and noise traders orders $\{\Delta u_n\}_{n=1}^N$.

    \item \textbf{Market makers.}
    There are $M \ge 3$ risk-neutral market makers, indexed by $m=1,\ldots,M$, who provide liquidity via price-dependent orders. Let $\Delta y_n^m(p_n)$ denote the order submitted by market maker $m$ at round $n$ as a function of the transaction price $p_n$.

    Under the disclosure regime, market makers condition their orders on the clearing price and the history of disclosed informed trades. Accordingly, $\Delta y_n^m$ is $\sigma(p_n,\Delta x_1,\ldots,\Delta x_{n-1})$-measurable. 
\end{itemize}

The aggregate order flow at trading round $n$ is defined as
\begin{equation}\label{aggregate_order}
    \Delta w_n := \Delta x_n + \Delta u_n.
\end{equation}
The transaction price $p_n$ is determined by the market clearing condition:
\begin{equation}\label{market_clearing_condition}
\sum_{m=1}^M \Delta y_n^m(p_n) + \Delta w_n= 0.
\end{equation}

\begin{definition}
An equilibrium is a profile of trading strategies $\left( \{\Delta \hat x_n\}_{n}, \{\Delta \hat y_n^m\}_{n,m} \right)$, together with a price process $\{p_n\}_{n}$ satisfying the market clearing condition \eqref{market_clearing_condition}, such that:
\begin{enumerate}
    \item
    Given the market makers' strategies $\{\Delta \hat y_n^m\}_{n}$,
    the informed trader maximizes expected profit conditional on the trader's information:
    \[
    \{\Delta \hat x_n\}_{n}
    \in
    \argmax_{\{\Delta x_n\}_{n}}
    \mathbb E\!\left[
        \sum_{n=1}^N (\tilde v - p_n)\,\Delta x_n
        \,\bigg|\,
        \tilde v
    \right].
    \]

    \item For each $m\in \{1,\ldots,M\}$, given the informed trader's strategy $\{\Delta \hat x_n\}_{n}$
    and the strategies of other market makers
    $\{\Delta \hat y_n^\ell\}_{n}$ for $\ell \neq m$,
    market maker $m$ maximizes expected profit:
    \[
    \{\Delta \hat y_n^m\}_{n}
    \in
    \argmax_{\{\Delta y_n^m\}_{n}}
    \mathbb E\!\left[
        \sum_{n=1}^N (\tilde v - p_n)\,\Delta y_n^m
    \right].
    \]
\end{enumerate}
\end{definition}

We focus on linear equilibria, in which all agents' trading strategies are affine functions of the information available to them.
To characterize the information structure induced by disclosure, we define the filtration
\[
\mathcal F_n := \sigma(\Delta x_1,\ldots,\Delta x_n),
\]
which is generated by the history of disclosed informed orders up to round $n$. Starting from the prior $\bar v_0 = \mathbb E[\tilde v] = 0$, we define the post-disclosure belief at round $n$ as
\begin{equation}\label{vbar}
    \bar v_n:= \mathbb{E}[\tilde v|\mathcal{F}_n].
\end{equation}
This belief represents the market's updated estimate of the asset value after the $n$-th informed order $\Delta x_n$ is disclosed.

We conjecture that the informed trader's order and the market makers' orders take linear forms consistent with this filtration.
Specifically, for each round $n$, the informed trader submits
\begin{equation}\label{disc_informed_strategy}
\Delta \hat x_n=
\beta_n(\tilde v-\bar v_{n-1})\,\Delta t
+
s_n
+
\Delta z_n,
\end{equation}
where $\beta_n$ is a deterministic coefficient, $\bar v_{n-1}$ is the belief defined in \eqref{vbar}, $s_n$ is $\mathcal F_{n-1}$-measurable, and $\Delta z_n$ is the dissimulation noise. The informed trader chooses the variance $\sigma_{z,n}^2$ of $\Delta z_n$.
At the same trading round, each market maker $m$ submits a price-dependent order
\footnote{Note that we do not impose symmetry among market makers as an ex-ante assumption. We allow the $M$ market makers to adopt heterogeneous pricing and inventory strategies, and show in Theorem~\ref{thm:equilibrium} that the unique linear equilibrium is necessarily symmetric.}
\begin{equation}\label{disc_mm_strategy}
    \Delta \hat y_n^m = \gamma^m_n(\bar v_{n-1} - p_n)+ r^m_n,
\end{equation}
where $\gamma_n^m$ is a deterministic coefficient and $r^m_n$ is $\mathcal F_{n-1}$-measurable.

Given the linear structure and the Gaussian assumptions, it follows that $\tv - \bv_{n-1}$ is independent of $\mathcal{F}_{n-1}$. Consequently, the Gaussian projection theorem yields
\begin{equation}\label{belief_update_mean}
\Delta \bar v_n
:= \bv_n - \bv_{n-1}=
\mathbb E\!\left[
\tilde v-\bar v_{n-1}
\mid
\mathcal F_n
\right]
=
\mathbb E\!\left[
\tilde v-\bar v_{n-1}
\mid
\Delta x_n - s_n
\right]
=
\psi_n(\Delta x_n - s_n),
\end{equation}
where the projection coefficient $\psi_n$ and the deterministic conditional variance $\Sigma_n := \mathbb E[(\tilde v - \bar v_n)^2]$ satisfy
\begin{align}
\psi_n
=
\frac{\beta_n\Sigma_{n-1}}
{\beta_n^2\Sigma_{n-1}\Delta t+\sigma_{z,n}^2}, \quad
\Sigma_n
=
\Sigma_{n-1}
-
\psi_n^2\bigl(\beta_n^2\Sigma_{n-1}\Delta t+\sigma_{z,n}^2\bigr)\Delta t .\label{variance_recursion}
\end{align}

In addition to the post-disclosure belief $\bar v_n$, we consider the pre-disclosure belief 
\begin{align*}
d_n := \mathbb E[\tilde v \mid \mathcal F_{n-1} \vee \sigma(\Delta w_n)],
\end{align*} 
which incorporates the aggregate order flow $\Delta w_n$ but precedes the disclosure of $\Delta x_n$. As before, the Gaussian projection theorem yields
\begin{equation}\label{phi_definition}
d_n = \bar v_{n-1} + \phi_n(\Delta w_n - s_n), \quad \text{where} \quad    \phi_n := \frac{\beta_n\Sigma_{n-1}} {\beta_n^2\Sigma_{n-1}\Delta t+\sigma_{z,n}^2 +\sigma_u^2}.
\end{equation}

Given the market clearing condition \eqref{market_clearing_condition}, the transaction price $p_n$ is determined endogenously. In their respective optimization problems, the informed trader perceives the price as
\begin{equation}\label{perceived_price_informed}
p_n = \bar v_{n-1} + \frac{1}{\gamma_n^\Sigma} \left( \Delta x_n + \Delta u_n + r_n^\Sigma \right),
\end{equation}
while each market maker $m$ perceives the price as
\begin{equation}\label{perceived_price_mm}
p_n = \bar v_{n-1} + \frac{1}{\gamma^\Sigma_n - \gamma^m_n} \left( \Delta y^m_n + \Delta \hat x_n + \Delta u_n + r^\Sigma_n - r^m_n \right).
\end{equation}
where $\gamma_n^\Sigma := \sum_{j=1}^M \gamma_n^j$ and $r_n^\Sigma := \sum_{j=1}^M r_n^j$ are the aggregate slope and intercept terms, respectively.

By imposing the optimality conditions from the definition of equilibrium and verifying the conjectured linear forms, we obtain the following characterization of the unique linear equilibrium.

\begin{theorem}\label{thm:equilibrium}
There exists a unique linear equilibrium, which is symmetric across market makers.
Specifically, there exist deterministic sequences
$\{\beta_n,\sigma_{z,n},\gamma_n,\psi_n\}_{n=1}^N$ such that, for each trading round $n$, the equilibrium trading strategies are given by
\begin{align*}
\Delta \hat x_n &= \beta_n(\tilde v-\bar v_{n-1})\,\Delta t + \Delta z_n,\\
\Delta \hat y_n^m &= \gamma_n(\bar v_{n-1}-p_n),\\
\Delta \bar v_n &= \psi_n \Delta \hat x_n,
\end{align*}
where the coefficients are 
\begin{align*}
    \beta_n &= \frac{N}{N-n+1}\sqrt{\frac{M-2}{M}}\frac{\sigma_u}{\sigma_v},   &&\sigma_{z,n}^2 = \frac{N-n}{N-n+1}\frac{M-2}{M}\sigma_u^2,\\
    \gamma_n &= \frac{2\sqrt{M-2}}{M\sqrt{M}}\frac{\sigma_u}{\sigma_v}, &&\psi_n = \sqrt{\frac{M}{M-2}}\frac{\sigma_v}{\sigma_u}.
\end{align*}
The associated value functions for the informed trader and each market maker $m$ are
\begin{align*}
&\max_{\{\Delta x_k\}_{k=n+1}^N}
\mathbb E\Big[
\sum_{k=n+1}^N(\tilde v-p_k)\Delta x_k
\;\Big|\;
\mathcal F_n\vee\sigma(\tilde v)
\Big]
= \frac{1}{2}\sqrt{\frac{M-2}{M}}\frac{\sigma_u}{\sigma_v}(\tilde v-\bar v_n)^2,\\
&\max_{\{\Delta y_k^m\}_{k=n+1}^N}
\mathbb E\Big[
\sum_{k=n+1}^N(\tilde v-p_k)\Delta y_k^m
\;\Big|\;
\mathcal F_{n}
\Big]
= \frac{N-n}{N} \frac{1}{M\sqrt{M(M-2)}} \sigma_u \sigma_v.
\end{align*}
\end{theorem}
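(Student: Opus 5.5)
We argue by backward induction on the trading round, carrying both agents' value functions as quadratic (informed) and constant-in-state (market maker) functions of the post-disclosure state. The induction hypothesis at round $n$ is that the informed trader's continuation value after round $n$ equals $\alpha_n(\tilde v-\bar v_n)^2+\delta_n$ and market maker $m$'s equals a deterministic constant $\kappa_n^m$, with $\alpha_N=\delta_N=\kappa_N^m=0$.

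\textbf{Market makers at round $n$.} Given the conjectured informed strategy \eqref{disc_informed_strategy}, market maker $m$ takes \eqref{perceived_price_mm} as its residual supply curve. Its round-$n$ objective is $\mathbb E[(\tilde v-p_n)\Delta y_n^m\mid \mathcal F_{n-1},p_n]$. Its continuation value does not depend on its own order, since by the hypothesis $\kappa^m_n$ is deterministic and the disclosed $\Delta x_n$ is unaffected by $\Delta y_n^m$. Observing $p_n$ is equivalent to observing $\Delta w_n$ given its own order, so $\mathbb E[\tilde v\mid\cdot]=d_n$ from \eqref{phi_definition}. Pointwise maximization of $(d_n-p_n)\Delta y$ against the linear residual supply gives a first-order condition. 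Matching coefficients in the form \eqref{disc_mm_strategy} yields
\[
\gamma_n^m=\frac{\gamma_n^\Sigma-\gamma_n^m}{1+\phi_n(\gamma_n^\Sigma-\gamma_n^m)}\cdot(\text{factor}),
\]
or more precisely, after writing $d_n-\bar v_{n-1}=\phi_n(\gamma_n^\Sigma(p_n-\bar v_{n-1})-r_n^\Sigma-s_n)$, a linear system in $(\gamma_n^1,\dots,\gamma_n^M)$. Summing and differencing these equations shows that all $\gamma_n^m$ coincide. This is the source of symmetry and uniqueness, and it requires $M\ge3$ for a positive solution. The analogous intercept equations force $r_n^m$ to be proportional to $s_n$.

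\textbf{Informed trader at round $n$.} Given \eqref{perceived_price_informed} with slope $\lambda_n=1/\gamma^\Sigma_n$, and using $\bar v_n=\bar v_{n-1}+\psi_n(\Delta x_n-s_n)$, the informed trader maximizes
\[
\mathbb E\big[(\tilde v-\bar v_{n-1}-\lambda_n(\Delta x_n+\Delta u_n+r^\Sigma_n))\Delta x_n+\alpha_n(\tilde v-\bar v_{n-1}-\psi_n(\Delta x_n-s_n))^2\big].
\]
Mixing with $\sigma_{z,n}^2>0$ (for $n<N$) requires indifference, so the quadratic coefficient must vanish: $\lambda_n=\alpha_n\psi_n^2$. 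The linear coefficient must also vanish, which gives $1=2\alpha_n\psi_n$ after matching the $\tilde v$ term. At $n=N$, $\alpha_N=0$ forces strict concavity and the standard Kyle choice $\beta_N\Delta t=1/(2\lambda_N)$ with $\sigma_{z,N}=0$. Intercept matching, combined with the market-maker intercept equations, forces $s_n=r_n^\Sigma=0$.

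\textbf{Solving the recursion.} We then substitute into \eqref{variance_recursion} and \eqref{phi_definition}. Using $\psi_n=1/(2\alpha_n)$ and $\lambda_n=\alpha_n\psi_n^2=1/(4\alpha_n)$, together with the symmetric market-maker condition linking $\gamma_n$, $\phi_n$ and $M$, we solve for $\beta_n$, $\sigma_{z,n}^2$ and $\Sigma_n$. We verify that $\alpha_{n-1}$, computed by plugging the optimizer back in, equals the constant $\tfrac12\sqrt{(M-2)/M}\,\sigma_u/\sigma_v$. We then check that $\Sigma_n=\sigma_v^2(N-n)/N$, which makes $\beta_n$ proportional to $N/(N-n+1)$. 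The market maker's value is the sum of the remaining per-round expected profits $\mathbb E[(\tilde v-p_k)\Delta y^m_k]$; each is the same constant times $\Delta t$, which gives the factor $(N-n)/N$.

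\textbf{Main obstacle.} The delicate step is the interplay at each round between the indifference conditions of the mixing informed trader and the market makers' oligopolistic first-order conditions. $\phi_n$ depends on $\sigma_{z,n}^2$, which is pinned down only through the variance-decrement consistency with $\Sigma_n$. We must show that this coupled nonlinear system has exactly one admissible solution, with $\sigma_{z,n}^2\ge0$ and the second-order conditions $\gamma_n^\Sigma-\gamma_n^m>0$ satisfied. We must also rule out pure-strategy deviations at $n<N$, since the HHL argument excludes invertible strategies. We will first guess that $\Sigma_n$ is linear in $n$ and verify it inductively; uniqueness then follows because each round's system reduces to a single quadratic equation with one positive root.
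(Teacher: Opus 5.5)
Your round-$n$ analysis is the same as the paper's. The informed trader's indifference conditions $\lambda_n=\alpha_n\psi_n^2$, $2\alpha_n\psi_n=1$, together with the vanishing constant term, are \eqref{IT_solution} with $\alpha_n=M\gamma_{n+1}/4$. The market makers' myopic first-order condition gives $\gamma_n^\Sigma=\frac{M-2}{(M-1)\phi_n}$, $M\gamma_n^m=\gamma_n^\Sigma$ and $s_n=r_n^m=0$. Two small corrections: the matching relation is $\gamma_n^m=(\gamma_n^\Sigma-\gamma_n^m)(1-\phi_n\gamma_n^\Sigma)$, which is not linear in the slopes, and $s_n=0$ uses $\psi_n\neq\phi_n$.

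\textbf{The gap.} It lies in closing the recursion, the step you yourself flag. The backward coefficient $\alpha$ is not known when you treat round $n$. The last round gives $\alpha_{N-1}=\frac{1}{4\lambda_N}=\frac12\sqrt{\frac{(M-2)\sigma_u^2\Delta t}{M\Sigma_{N-1}}}$. This depends on $\Sigma_{N-1}$, which is generated by the earlier rounds' $\beta_k,\sigma_{z,k}$, which in turn depend on $\alpha$. Consequently:
\begin{itemize}
\item Unique solvability of each round given $(\alpha_n,\Sigma_{n-1})$ does not give uniqueness of the coupled system.
\item Guessing $\Sigma_n$ linear and verifying it proves only existence.
\item Plugging the optimizer back in does not produce the number $\frac12\sqrt{(M-2)/M}\,\sigma_u/\sigma_v$. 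By indifference it produces only $\alpha_{n-1}=\alpha_n$, because the value equals the value at $\Delta x_n=0$. The number appears only after the forward variance recursion is solved.
\end{itemize}

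\textbf{What closes it.} The paper does this through $\phi_n=\phi_{n+1}$ and an induction hypothesis stated in terms of $\Sigma_{k-1}$. The argument runs as follows.
\begin{itemize}
\item The relations $\alpha_{n-1}=\alpha_n$ and $\lambda_{N-1}=\frac{1}{4\alpha_{N-1}}=\lambda_N$ make $\lambda$, $\psi=2\lambda$, $\gamma^\Sigma=1/\lambda$ and $\phi=\frac{M-2}{(M-1)\gamma^\Sigma}$ the same in every round, including $n=N$.
\item Then $\psi_n/\phi_n=\frac{2M-2}{M-2}$ forces $\beta_n^2\Sigma_{n-1}\Delta t+\sigma_{z,n}^2=\frac{M-2}{M}\sigma_u^2$ for every $n$; this is the paper's \eqref{sigma_u_beta_beta}.
\item By \eqref{variance_recursion}, every decrement $\Sigma_{n-1}-\Sigma_n=\psi^2\frac{M-2}{M}\sigma_u^2\Delta t$ is the same constant. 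For $n=N$ this uses $\sigma_{z,N}=0$ and hence $\Sigma_N=0$.
\item Summing the $N$ equal decrements against $\Sigma_0=\sigma_v^2$ gives $\Sigma_n=\frac{N-n}{N}\sigma_v^2$ and $\psi=\sqrt{\frac{M}{M-2}}\frac{\sigma_v}{\sigma_u}$, with no freedom left.
\item Only then do $\beta_n$, $\sigma_{z,n}^2$, $\gamma_n$ and both value functions follow.
\end{itemize}
Without this step your argument establishes existence of the stated equilibrium but not the uniqueness claimed in the theorem.
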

\begin{proof}
See the Online Appendix for the proof.
\end{proof}

Several remarks on the equilibrium properties are in order. First, the existence of a symmetric linear equilibrium requires $M > 2$. Second, several features of our equilibrium are consistent with the perfect competition benchmark of \cite{HHL2001}. Specifically, the intercept terms $s_n$ in \eqref{disc_informed_strategy} and $r_n^m$ in \eqref{disc_mm_strategy} vanish, and both the price slope $\gamma_n$ and disclosure sensitivity $\psi_n$ remain constant over time.
Furthermore, the informed trader strategically determines the dissimulation noise variance $\sigma_{z,n}^2$ to regulate the pace of information revelation, with this noise term vanishing only at the terminal round $n=N$. Third, as the number of market makers $M$ approaches infinity, our equilibrium coefficients converge to the perfectly competitive results of \cite{HHL2001}. However, a key distinction in our model is that imperfect competition among market makers allows them to earn strictly positive expected profits, in contrast to the zero-profit outcome in the perfect competition setting.

\section{Comparative Analysis}

We evaluate the economic implications of our model by contrasting it with three benchmark Kyle-type environments: the perfect-competition model without disclosure (continuous time limit in \cite{K1985}), the perfect-competition model with disclosure (\cite{HHL2001}), and the imperfect-competition model without disclosure (\cite{Choi2025}). Table~\ref{tab:model_comparison} summarizes the equilibrium price impact, agents' expected profits, and price autocorrelation across these market structures.

\begin{table}[t]
\centering
\caption{Comparison of Equilibrium Outcomes across Market Structures}
\label{tab:model_comparison}
\begin{tabular}{lccccc}
\toprule
 & \multicolumn{2}{c}{\textbf{Perfect Competition}} & & \multicolumn{2}{c}{\textbf{Imperfect Competition}} \\
\cmidrule{2-3} \cmidrule{5-6}
 & No Disclosure & Disclosure & & No Disclosure & Disclosure \\
 & {\scriptsize Limit of \cite{K1985}} & {\scriptsize \cite{HHL2001}} & & {\scriptsize \cite{Choi2025}} &  {\scriptsize (This study)}  \\
\midrule
Price impact & $\frac{\sigma_v}{\sigma_u}$ & $\frac{1}{2} \frac{\sigma_v}{\sigma_u}$ & & $\frac{M-1}{M-2} \frac{\sigma_v}{\sigma_u}$ & $\frac{\sqrt{M}}{2\sqrt{M-2}} \frac{\sigma_v}{\sigma_u}$ \\
\addlinespace
Informed trader's profit & $\sigma_v \sigma_u$ & $\frac{1}{2}  \sigma_v \sigma_u$ & & $\sigma_v \sigma_u$ & $\frac{\sqrt{M-2}}{2\sqrt{M}}\sigma_v \sigma_u$  \\
\addlinespace
Market makers' profit      & 0 & 0 & & $\frac{1}{M-2}  \sigma_v \sigma_u$ & $\frac{1}{\sqrt{M(M-2)}}  \sigma_v \sigma_u$ \\
\addlinespace
Noise traders' profit      & $-\sigma_v \sigma_u$  & $-\frac{1}{2}  \sigma_v \sigma_u$ & &  $-\frac{M-1}{M-2}  \sigma_v \sigma_u$ & $-\frac{\sqrt{M}}{2\sqrt{M-2}} \sigma_v \sigma_u$ \\
\addlinespace
Price autocorrelation
& $0$ 
& $0$ 
& 
& $-\frac{M-1}{(M-1)^2+1}$ 
& $-\frac{1}{2(M-1)}$ \\
\bottomrule
\end{tabular}

\vspace{0.5em}
\parbox{\textwidth}
{\scriptsize
\textit{Note}: This table compares equilibrium price impact, expected profits, and price autocorrelation across different market settings. $\sigma_v$ and $\sigma_u$ denote the fundamental and noise volatility, respectively, and $M$ is the number of market makers.}
\end{table}

Price impact refers to the linear coefficient relating transaction prices to aggregate order flow, corresponding to the Kyle $\lambda$; in the imperfectly competitive environments considered here and in \cite{Choi2025}, this object is given by $\frac{1}{M\gamma}$. The profits for the informed trader and market makers represent ex-ante expectations over the entire trading horizon, 
\begin{align*}
\Pi_I := \mathbb{E}\Big[\sum_{k=1}^N(\tilde v - p_k)\Delta x_k\Big] \quad \textrm{and} \quad \Pi_M := \mathbb{E}\Big[\sum_{k=1}^N(p_k - \tilde v)\Delta w_k\Big],
\end{align*}
while the noise traders' profit is defined residually as 
$$\Pi_N := -\Pi_I - \Pi_M.$$ Price autocorrelation is measured as the limit 
\begin{equation*}
    \lim_{N\to\infty}\sup_{n= 1, \dots, N}\mathrm{Corr}(\Delta p_n,\Delta p_{n+1}).
\end{equation*}
For our model (imperfect competition with disclosure), the detailed derivation of the price autocorrelation is provided in the Online Appendix.

The comparison in Table~\ref{tab:model_comparison} leads to the following key observations regarding how disclosure interacts with market competition.

\begin{proposition}\label{cor:disclosure_impact}
Compared to the perfect competition ($M=\infty$) benchmark, the introduction of mandatory disclosure under imperfect competition ($3\leq M < \infty$) yields the following relative effects:
\begin{enumerate}
\item \textbf{Greater Liquidity Improvement}: The proportional reduction in price impact induced by disclosure is strictly larger under imperfect competition.
\begin{align*}
\Big(\tfrac{\sigma_v}{\sigma_u}- \tfrac{1}{2} \tfrac{\sigma_v}{\sigma_u}\Big)/ \Big(\tfrac{\sigma_v}{\sigma_u}\Big) < \Big(\tfrac{M-1}{M-2} \tfrac{\sigma_v}{\sigma_u}- \tfrac{\sqrt{M}}{2\sqrt{M-2}} \tfrac{\sigma_v}{\sigma_u}\Big)/\Big(\tfrac{M-1}{M-2} \tfrac{\sigma_v}{\sigma_u}\Big).
\end{align*}

\item \textbf{Reduction in Informational and Oligopolistic Rents}: 
Under imperfect competition, mandatory disclosure leads to a larger reduction in the informed trader's expected profit than under perfect competition,
\[
\Big(\sigma_v \sigma_u - \tfrac{1}{2}\sigma_v \sigma_u\Big)
<
\Big(\sigma_v \sigma_u - \tfrac{\sqrt{M-2}}{2\sqrt{M}}\sigma_v \sigma_u\Big).
\]
Moreover, disclosure compresses market makers' oligopolistic profits,
\[
\tfrac{1}{\sqrt{M(M-2)}}\sigma_v \sigma_u
<
\tfrac{1}{M-2}\sigma_v \sigma_u.
\]


\item \textbf{Greater Reduction in Noise Traders' Costs}: The decrease in expected trading costs for noise traders is more pronounced when market-making competition is weaker.
\begin{align*}
\Big(\sigma_v \sigma_u- \tfrac{1}{2} \sigma_v \sigma_u\Big) < \Big(\tfrac{M-1}{M-2} \sigma_v \sigma_u- \tfrac{\sqrt{M}}{2\sqrt{M-2}} \sigma_v \sigma_u\Big).
\end{align*}

\item \textbf{Improvement in Price Efficiency}: Disclosure moves the negative price autocorrelation induced by imperfect competition closer to zero.
\begin{align*}
-\tfrac{M-1}{(M-1)^2+1}<-\tfrac{1}{2(M-1)}<0.
\end{align*}

\end{enumerate}
\end{proposition}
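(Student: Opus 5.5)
The plan is to take the entries of Table~\ref{tab:model_comparison} as given: the benchmark columns come from \cite{K1985}, \cite{HHL2001} and \cite{Choi2025}, and our column comes from Theorem~\ref{thm:equilibrium} together with the autocorrelation computation in the Online Appendix. Each item then reduces to an elementary algebraic inequality in $M$. The factors $\sigma_v/\sigma_u$ and $\sigma_v\sigma_u$ are strictly positive, so I will divide them out throughout. I will set $q:=\sqrt{M(M-2)}$, which is well defined and positive for $M\ge 3$, and rely on the two bounds
\[
M-2<q<M-1<M .
\]
The left bound follows from $(M-2)^2<M(M-2)$, which is equivalent to $M-2<M$. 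The right bound follows from $M(M-2)=(M-1)^2-1<(M-1)^2$. These two facts carry almost all of the argument.

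\textbf{Item (1).} The left side equals $1/2$. The right side simplifies to
\[
1-\frac{M-2}{M-1}\cdot\frac{\sqrt M}{2\sqrt{M-2}}=1-\frac{q}{2(M-1)} .
\]
So the inequality is equivalent to $q<M-1$, which is the right bound above.

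\textbf{Item (2).} The first inequality reads $1/2<1-\tfrac12\sqrt{(M-2)/M}$, i.e.\ $\sqrt{(M-2)/M}<1$, which is immediate. The second reads $1/q<1/(M-2)$, i.e.\ $M-2<q$, which is the left bound.

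\textbf{Item (3).} I will multiply the inequality
\[
\tfrac12<\tfrac{M-1}{M-2}-\tfrac{\sqrt M}{2\sqrt{M-2}}
\]
by $2(M-2)>0$. This gives $M-2<2(M-1)-q$, i.e.\ $q<M$, which again follows from the bounds.

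\textbf{Item (4).} The bound $-\tfrac{1}{2(M-1)}<0$ is trivial. For the first inequality I will negate and cross-multiply by the positive denominators. It becomes $2(M-1)^2>(M-1)^2+1$, i.e.\ $(M-1)^2>1$, which holds precisely because $M>2$.

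The only genuine obstacle is not in these inequalities but in the table entries themselves. The price impact $1/(M\gamma)$ and both profits follow directly from Theorem~\ref{thm:equilibrium}. For the profits, I would evaluate the value functions at $n=0$ with $\mathbb E[(\tilde v-\bar v_0)^2]=\sigma_v^2$, and multiply the market maker's value by $M$ to obtain aggregate market-maker profit. The autocorrelation entry $-\tfrac{1}{2(M-1)}$ must instead be taken from the Online Appendix. If that entry needed re-derivation, I would compute $\Delta p_n=\Delta\bar v_{n-1}+\lambda(\Delta w_n-\Delta w_{n-1})$ from $p_n=\bar v_{n-1}+\lambda\Delta w_n$, with $\lambda=1/(M\gamma)$. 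I would then obtain the covariances using $\Delta\bar v_{n-1}=\psi\Delta x_{n-1}$ and the equilibrium variances, and pass to the limit $N\to\infty$. This is the step I expect to be the most laborious.
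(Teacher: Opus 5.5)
Your proposal is correct and follows the paper's route: the paper gives no separate proof beyond reading off the entries of Table~\ref{tab:model_comparison} (its own column coming from Theorem~\ref{thm:equilibrium} and Appendix~\ref{app:price_autocorr}) and comparing them, and your bounds $M-2<\sqrt{M(M-2)}<M-1$ supply the elementary verification of each inequality that the paper leaves implicit. Your recovery of the profit entries from the value functions at $n=0$ (multiplying the market maker's value by $M$) and your expression for $\Delta p_n$ also agree with the paper's appendix computations.
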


These results suggest that disclosure serves as an effective mechanism to mitigate market frictions, particularly when market-making competition is weak.
Furthermore, while imperfect competition inherently induces negative price autocorrelation, as shown in \cite{Choi2025}, the introduction of mandatory disclosure moves this autocorrelation closer to zero.

In the next section, we provide direct empirical evidence supporting these theoretical predictions.


\section{Empirical Evidence}\label{sec:empirical}

In this section, we provide direct empirical evidence for the central prediction of \proref{cor:disclosure_impact}: the liquidity-enhancing effect of mandatory disclosure is strictly larger when competition among market makers is weaker. We exploit the enactment of Section~403 of the Sarbanes--Oxley Act (SOX) as a natural experiment and test whether the resulting improvement in bid--ask spreads varies with the number of active market makers.

\subsection{Testable prediction}

\proref{cor:disclosure_impact} establishes that, under imperfect competition, the proportional reduction in price impact induced by disclosure is strictly larger when the number of market makers $M$ is small. Because the bid--ask spread is a standard empirical proxy for the theoretical price impact (Kyle's $\lambda$), we formulate the following testable prediction.

\begin{quote}
\textit{Following a strengthening of mandatory disclosure, the reduction in bid--ask spreads is larger for stocks with fewer market makers.}
\end{quote}

\subsection{Data and institutional setting}

\paragraph{Event.}
On August~29, 2002, the SEC adopted accelerated insider-trading disclosure rules under SOX Section~403, shortening the filing deadline from ten calendar days to two business days after the transaction date. This regulatory change constitutes a significant and sudden strengthening of mandatory disclosure obligations for corporate insiders, providing a clean natural experiment in which the pre-event period (with its significant reporting delays) proxies for the theoretical no-disclosure benchmark in our model, while the post-event period reflects the enhanced disclosure regime.

\paragraph{Sample.}
We obtain daily stock-level data from the Center for Research in Security Prices (CRSP) Daily Stock File via WRDS. Our sample covers the period March~1 through December~31, 2002, centered around the August~29 event date. We restrict the sample to common stocks (SHRCD $\in \{10,11\}$) listed on the NYSE or NASDAQ (EXCHCD $\in \{1,3\}$). After imposing standard filters---requiring valid closing prices, bid and ask quotes, positive trading volume, and a minimum of 15 prior trading days for volatility estimation---the final sample comprises 692,453 stock-day observations across 4,046 unique stocks (466 NYSE; 3,580 NASDAQ).

A key variable in our analysis is MMCNT, the CRSP-reported count of market makers (dealers) posting quotes for a given stock on a given day. This variable captures all liquidity providers submitting quotes, including specialists, floor brokers, and supplemental liquidity providers on the NYSE, and competing dealers on the NASDAQ. Because MMCNT exhibits day-to-day fluctuation driven by transient factors, we compute the time-series average $\overline{\text{MMCNT}}_i$ for each stock $i$ over the full sample period and treat it as a stock-level constant. This averaging procedure isolates the cross-sectional variation in the intensity of market-making competition, which is the relevant dimension for testing \proref{cor:disclosure_impact}. The distribution of $\overline{\text{MMCNT}}_i$ differs markedly across exchanges: NYSE-listed stocks have a median of approximately 3 (mean $\approx$ 7), while NASDAQ-listed stocks have a median of approximately 18 (mean $\approx$ 23). This approximately three-fold difference provides substantial cross-sectional variation for identifying the heterogeneous effect of disclosure.

Table~\ref{tab:sumstats} reports summary statistics for the key variables, partitioned by event period and exchange.

\begin{table}[t]
\centering
\caption{Summary statistics}
\label{tab:sumstats}
\begin{tabular}{lrrrrrr}
\toprule
 & \multicolumn{3}{c}{\textbf{Pre-SOX}} & \multicolumn{3}{c}{\textbf{Post-SOX}} \\
\cmidrule(lr){2-4} \cmidrule(lr){5-7}
Variable & Mean & Median & Std & Mean & Median & Std \\
\midrule
\multicolumn{7}{l}{\textit{Panel A: Full sample} ($N_{\text{pre}} = 398{,}321$; $N_{\text{post}} = 294{,}132$)} \\
\addlinespace
Spread (\%) & 2.50 & 1.12 & 4.02 & 2.50 & 1.09 & 4.08 \\
$\overline{\text{MMCNT}}$ & 20.7 & 17.1 & 14.9 & 20.5 & 17.0 & 14.6 \\
MktCap (\$M) & 987 & 124 & 7{,}188 & 840 & 102 & 6{,}326 \\
Volume (shares) & 544{,}756 & 44{,}525 & 5{,}464{,}530 & 463{,}159 & 37{,}109 & 2{,}867{,}266 \\
Volatility (20d) & 0.046 & 0.037 & 0.036 & 0.050 & 0.040 & 0.041 \\
\addlinespace
\multicolumn{7}{l}{\textit{Panel B: NYSE} ($n = 466$ stocks)} \\
\addlinespace
Spread (\%) & 1.17 & 0.70 & 1.70 & 1.18 & 0.68 & 1.84 \\
$\overline{\text{MMCNT}}$ & 6.7 & 3.0 & 7.8 & 6.6 & 3.0 & 7.8 \\
\addlinespace
\multicolumn{7}{l}{\textit{Panel C: NASDAQ} ($n = 3{,}580$ stocks)} \\
\addlinespace
Spread (\%) & 2.69 & 1.24 & 4.21 & 2.69 & 1.21 & 4.28 \\
$\overline{\text{MMCNT}}$ & 22.7 & 18.5 & 14.6 & 22.5 & 18.5 & 14.2 \\
\bottomrule
\end{tabular}

\vspace{0.5em}
\parbox{\textwidth}
{\scriptsize
\textit{Note}: The sample consists of NYSE and NASDAQ common stocks from the CRSP Daily Stock File, March--December 2002. Pre-SOX denotes the period before August~29, 2002; Post-SOX denotes the period on or after that date. Spread is defined as $(\text{Ask} - \text{Bid})/\text{Midpoint} \times 100$. $\overline{\text{MMCNT}}$ is the stock-level time-series average of the daily market maker count. Volatility is the rolling 20-trading-day standard deviation of daily returns. MktCap is price times shares outstanding.}
\end{table}

\subsection{Empirical model}

We estimate the following difference-in-differences regression:
\begin{equation}\label{eq:did}
\begin{split}
\log(\text{Spread}_{i,t}) &= \alpha + \beta_1 \, \text{Post}_t + \beta_2 \, \log(1+\overline{\text{MMCNT}}_i) \\
&\qquad + \beta_3 \, \text{Post}_t \times \log(1+\overline{\text{MMCNT}}_i) + \boldsymbol{\gamma}' \mathbf{X}_{i,t} + \varepsilon_{i,t},
\end{split}
\end{equation}
where $\text{Spread}_{i,t} := (\text{Ask}_{i,t} - \text{Bid}_{i,t}) / \text{Midpoint}_{i,t}$ is the relative bid--ask spread for stock $i$ on day $t$, and $\text{Post}_t$ is a dummy variable equal to one for dates on or after August~29, 2002. The variable $\log(1+\overline{\text{MMCNT}}_i)$ captures the cross-sectional variation in market-making competition; the logarithmic transformation reflects the theoretical prediction that the marginal effect of an additional market maker diminishes as $M$ increases. The control vector $\mathbf{X}_{i,t}$ includes $\log(\text{MktCap}_{i,t})$, $\log(\text{Volume}_{i,t})$, the 20-day rolling return volatility, and $1/\text{Price}_{i,t}$ (to absorb the mechanical tick-size effect on percentage spreads).

The coefficient of primary interest is $\beta_3$, which captures the differential effect of disclosure across stocks with varying levels of market-making competition. \proref{cor:disclosure_impact} predicts $\beta_3 > 0$: since higher $\overline{\text{MMCNT}}$ corresponds to stronger competition, the spread reduction following disclosure should be smaller for such stocks, implying a positive interaction term in a log-spread regression. The model also predicts $\beta_1 < 0$ (disclosure reduces spreads on average) and $\beta_2 < 0$ (more competition is associated with lower spreads in the cross section).

Standard errors are clustered at the stock level to account for within-stock serial correlation in the error term.

\subsection{Results}

Table~\ref{tab:regression} reports the estimation results. All three coefficient predictions are confirmed with high statistical significance.

\begin{table}[t]
\centering
\caption{Difference-in-differences regression: log(Spread) on disclosure and competition}
\label{tab:regression}
\begin{tabular}{lcccc}
\toprule
& Coefficient & Std.\ Error & $t$-statistic & $p$-value \\
\midrule
Intercept & $1.929$ & $0.091$ & $21.25$ & ${<}\,0.001$ \\
$\text{Post}$ ($\beta_1$) & $-0.374$ & $0.020$ & $-18.75$ & ${<}\,0.001$ \\
$\log(1+\overline{\text{MMCNT}})$ ($\beta_2$) & $-0.378$ & $0.014$ & $-26.85$ & ${<}\,0.001$ \\
$\text{Post} \times \log(1+\overline{\text{MMCNT}})$ ($\beta_3$) & $0.105$ & $0.007$ & $15.73$ & ${<}\,0.001$ \\
\addlinespace
$\log(\text{MktCap})$ & $-0.449$ & $0.010$ & $-47.49$ & ${<}\,0.001$ \\
$\log(\text{Volume})$ & $-0.047$ & $0.005$ & $-9.86$ & ${<}\,0.001$ \\
Volatility (20d) & $4.989$ & $0.280$ & $17.85$ & ${<}\,0.001$ \\
$1/\text{Price}$ & $0.086$ & $0.010$ & $8.93$ & ${<}\,0.001$ \\
\midrule
Observations & \multicolumn{4}{c}{692,453} \\
Clusters (stocks) & \multicolumn{4}{c}{4,046} \\
$R^2$ & \multicolumn{4}{c}{0.533} \\
\bottomrule
\end{tabular}

\vspace{0.5em}
\parbox{\textwidth}
{\scriptsize
\textit{Note}: The dependent variable is $\log(\text{Spread}_{i,t})$, where Spread is the relative bid--ask spread. $\text{Post}_t = 1$ for dates $\geq$ August~29, 2002. $\overline{\text{MMCNT}}_i$ is the stock-level average market maker count. Standard errors are clustered by stock (PERMNO). All coefficients are significant at the 1\% level.}
\end{table}

The estimate $\hat\beta_1 = -0.374$ ($t = -18.75$) indicates that, on average, bid--ask spreads declined significantly following the implementation of SOX Section~403. This is consistent with the prediction of disclosure-only models such as \cite{HHL2001}, where mandatory disclosure reduces information asymmetry and price impact.

The estimate $\hat\beta_2 = -0.378$ ($t = -26.85$) confirms that stocks with more market makers exhibit lower spreads in the cross section. This cross-sectional pattern is consistent with models that isolate the competition channel, such as \cite{Choi2025}, where price impact is a decreasing function of the number of market makers.

The key estimate is $\hat\beta_3 = 0.105$ ($t = 15.73$), which is positive and highly significant. This indicates that the disclosure-induced reduction in spreads is attenuated for stocks with a greater number of market makers. Equivalently, stocks with fewer market makers---corresponding to weaker competition---experienced a larger improvement in liquidity following the regulatory change. This finding directly supports the prediction of \proref{cor:disclosure_impact}.

To assess economic significance, we evaluate the estimated total effect of SOX on log-spreads at representative levels of $\overline{\text{MMCNT}}$:
\[
\hat\beta_1 + \hat\beta_3 \times \log(1 + \overline{\text{MMCNT}}).
\]
For a typical NYSE stock with $\overline{\text{MMCNT}} \approx 3$ (sample median), the estimated effect is $-0.374 + 0.105 \times \log 4 \approx -0.229$, which corresponds to a 20.5\% reduction in spreads ($\exp(-0.229) - 1$). For a typical NASDAQ stock with $\overline{\text{MMCNT}} \approx 18$ (sample median), the estimated reduction is $-0.374 + 0.105 \times \log 19 \approx -0.065$, or a 6.3\% reduction. The difference of 14.2 percentage points demonstrates that the economic magnitude of the heterogeneous disclosure effect is substantial.

Figure~\ref{fig:sox_effect} provides a graphical illustration of this heterogeneous effect by plotting the estimated spread reduction against $\overline{\text{MMCNT}}$.

\begin{figure}[t]
\centering
\includegraphics[width=0.5\linewidth]{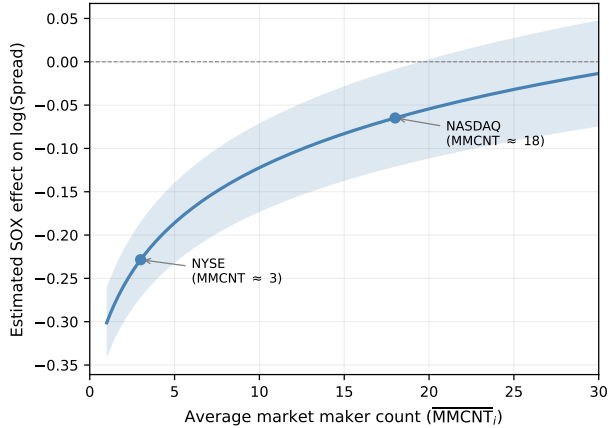}
\caption{Estimated effect of SOX Section~403 on log(Spread) as a function of market maker count. The solid curve plots $\hat\beta_1 + \hat\beta_3 \times \log(1+\overline{\text{MMCNT}})$ using estimates from Table~\ref{tab:regression}. The shaded region indicates the 95\% confidence band. Stocks with fewer market makers experience a larger reduction in spreads following the disclosure reform, consistent with \proref{cor:disclosure_impact}.}
\label{fig:sox_effect}
\end{figure}

\subsection{Complementary cross-country evidence}

The analysis above exploits within-country variation across U.S.\ stocks.
Complementary evidence can be drawn from the cross-country literature.
\cite{FGH2006} document a positive association between disclosure strength
and market development across 50 international exchanges.
\cite{LLM2012} show that the transparency--liquidity relation is two to
three times stronger in countries with weak governance institutions than in
countries with strong ones.
To the extent that less developed markets and weaker governance
environments are associated with fewer active intermediaries, both sets of
findings are consistent with \proref{cor:disclosure_impact}'s prediction that the marginal
benefit of disclosure is amplified when competition among market makers is
limited. A detailed reinterpretation appears in the Online Appendix.

\section{Conclusion}

This paper develops a Kyle-type model that explicitly incorporates imperfect competition among market makers into the analysis of disclosure. By solving this model, we show that the liquidity-enhancing effect of disclosure is fundamentally contingent on the degree of market-making competition. Our analysis reveals that disclosure provides a greater marginal liquidity benefit in markets where competition is weak.
We empirically validate this central prediction using the 2002 Sarbanes--Oxley Act disclosure reform as a natural experiment, demonstrating that the post-SOX reduction in bid--ask spreads is significantly larger for U.S.\ equities with fewer active market makers.
Overall, these results highlight the importance of considering the underlying competitive structure of the financial sector when evaluating the effectiveness of disclosure requirements.

Our analysis assumes an exogenously fixed number of market makers and a single informed trader.
Two extensions merit discussion.
First, in the endogenous participation model of \citet{IN2013},
dealers enter until expected profit equals the entry cost,
so fewer dealers are active when entry costs are high.
Mandatory disclosure would lower adverse selection and raise dealer profits,
with the profit gain larger when fewer dealers share the reduced information rent.
The incentive for new entry is therefore strongest in low-competition markets,
which is where \proref{cor:disclosure_impact} predicts the largest disclosure benefit,
so endogenous participation \emph{reinforces} our main result.
Second, \citet{GL2012} show that competition among multiple informed traders under a disclosure requirement leads to more aggressive trading, which accelerates price discovery and improves liquidity.
When this multi-insider effect is combined with imperfect competition among market makers, the reinforcement is twofold: disclosure attracts more aggressive informed trading while simultaneously yielding a larger proportional reduction in price impact in markets with fewer dealers.
Both channels point in the same direction as \proref{cor:disclosure_impact}, suggesting that our central prediction is robust to these generalizations.
Formally extending the model along both dimensions is a promising direction for future research.

\bigskip\bigskip




\bibliographystyle{abbrvnat}
\bibliography{reference_disclosure}

\newpage

\begin{center}
{\bf   Online Appendix}
\end{center}

\appendix
\section{Proof of Theorem~\ref{thm:equilibrium}}\label{proof}

The proof proceeds by backward induction. We first note that at the terminal trading round $n=N$, the informed trader has no incentive to conceal information for future periods, leading to zero dissimulation noise, $\sigma_{z,N}^2 = 0$. This terminal condition yields $\Sigma_N = 0$.

Fix $n\in\{1,\ldots,N-1\}$ and suppose that, for all trading rounds $k=n+1,\ldots,N$, the equilibrium coefficients and value functions are given by
\begin{align}
s_k &= 0, \quad r_k^m = 0, 
\label{induction_r_k}\\
\gamma_k^m &= \frac{M-2}{M(M-1)}\frac{1}{\phi_k} =: \gamma_k, 
\label{induction_gamma_k}\\
\psi_k &= \frac{2M-2}{M-2}\phi_k, 
\label{induction_psi_k}\\
\Sigma_k &= \frac{N-k}{N-k+1}\Sigma_{k-1}, 
\label{induction_Sigma_k}\\
\beta_k &= \frac{N}{N-k+1}\frac{M-2}{2M-2}\frac{1}{\phi_k}, 
\label{induction_beta_k}\\
\phi_k &= \sqrt{\frac{M(M-2)}{(2M-2)^2}\frac{N}{N-k+1}\frac{\Sigma_{k-1}}{\sigma_u^2}},
\label{induction_phi_k}\\
\sigma_{z,k}^2 &= \frac{M-2}{M}\frac{N-k}{N-k+1}\sigma_u^2, 
\label{induction_sigmaz_k}\\
\max_{\{\Delta x_j\}_{j=k}^N} &\mathbb{E}\Big[\sum_{j=k}^N(\tilde v - p_j)\Delta x_j 
\Big|\mathcal{F}_{k-1}\vee \sigma(\tilde v)\Big]
= \frac{M\gamma_k}{4}(\tilde v- \bar{v}_{k-1})^2,
\label{induction_IT_value}\\
\max_{\{\Delta y^m_j\}_{j=k}^N} &\mathbb{E}\Big[\sum_{j=k}^N(\tilde v - p_j)\Delta y^m_j 
\Big| \mathcal{F}_{k-1}\Big]
= \nu_{k-1}=\nu_k + \frac{\beta_k\Sigma_{k-1}\Delta t}{M(M-2)},
\label{induction_MM_value}
\end{align}
where $\nu_N=0$. Note that these expressions are consistent with the terminal conditions at $k=N$, such as $\sigma_{z,N}^2=0$ and $\Sigma_N=0$.

{\bf Informed trader's optimization problem}: 
Suppose that each market maker $m=1,\dots,M$ follows the strategy \eqref{disc_mm_strategy}. 
Using \eqref{belief_update_mean} and \eqref{perceived_price_informed}, the informed trader's value at trading round $n$ can be written as
\begin{align}
&\mathbb{E}\Big[(\tilde v- p_n)\Delta x_n 
+ \sum_{k=n+1}^N(\tilde v - p_k)\Delta x_k
\;\big|\; \mathcal{F}_n\vee \sigma(\tilde v)\Big] \nonumber\\
&= \mathbb{E}\Big[(\tilde v- p_n)\Delta x_n 
+ \frac{M\gamma_{n+1}}{4}(\tilde v- \bar{v}_{n})^2
\;\big|\; \mathcal{F}_n\vee \sigma(\tilde v)\Big] \nonumber\\
&= \mathbb{E}\Big[\Big(\tilde v  - \bar v_{n-1}  
- \frac{1}{\gamma^\Sigma_n}\big(\Delta x_n+\Delta u_n+r^\Sigma_n\big)\Big)\Delta x_n 
+ \frac{M\gamma_{n+1}}{4}\big(\tilde v- \bar{v}_{n-1} - \psi_n(\Delta x_n - s_n)\big)^2
\;\big|\; \mathcal{F}_n\vee \sigma(\tilde v)\Big] \nonumber\\
&= \Big(\tilde v  - \bar v_{n-1}  
- \frac{1}{\gamma^\Sigma_n}(\Delta x_n+r^\Sigma_n)\Big)\Delta x_n 
+ \frac{M\gamma_{n+1}}{4}\big(\tilde v- \bar{v}_{n-1} - \psi_n(\Delta x_n - s_n)\big)^2 . \label{value_iterate}
\end{align}
The first-order condition with respect to $\Delta x_n$ is
\begin{align*}
\Big(-\frac{2}{\gamma_n^\Sigma}+ \frac{M\gamma_{n+1}\psi_n^2}{2}\Big)\Delta x_n
+\Big(1-\frac{M\gamma_{n+1}\psi_n}{2}\Big)(\tilde v  - \bar v_{n-1})
-\frac{r^\Sigma_n}{\gamma^\Sigma_n}
-\frac{M\gamma_{n+1}\psi_n^2 s_n}{2}
=0 .
\end{align*}
For the mixed strategy with dissimulation noise to be optimal, the informed trader must be indifferent among all $\Delta x_n$. Furthermore, for a linear equilibrium to exist, this condition must hold independently of the realizations of $\tilde{v} - \bar{v}_{n-1}$. Consequently, all coefficients in the first-order condition must vanish:
\begin{align*}
-\frac{2}{\gamma_n^\Sigma}+ \frac{M\gamma_{n+1}\psi_n^2}{2}
=
1-\frac{M\gamma_{n+1}\psi_n}{2}
=
\frac{r^\Sigma_n}{\gamma^\Sigma_n}
+\frac{M\gamma_{n+1}\psi_n^2 s_n}{2}
=0
\end{align*}
subject to the second-order condition $\frac{1}{\gamma^\Sigma_n} - \frac{M\gamma_{n+1}\psi_n^2}{4}>0$.
Rearranging these equations yields
\begin{equation}
\psi_n = \frac{2}{M\gamma_{n+1}} = \frac{2}{\gamma_{n}^\Sigma},
\qquad
\psi_n s_n = -\frac{r^\Sigma_n}{\gamma^\Sigma_n}.
\label{IT_solution}
\end{equation}

{\bf Market maker $m$'s optimization problem}:
Suppose that the informed trader follows the strategy \eqref{disc_informed_strategy} and that each market maker $l\neq m$ adopts the strategy \eqref{disc_mm_strategy}. 
Using the induction hypothesis in \eqref{induction_MM_value}, the market maker's problem reduces to a myopic optimization. 
Substituting \eqref{phi_definition} and \eqref{perceived_price_mm}, we obtain
\begin{align}
&\mathbb{E}\big[(\tilde v- p_n)\Delta y^m_n \,\big|\,\mathcal{F}_{n-1}\vee \sigma(\Delta w_n)\big]\nonumber\\
&=\mathbb{E}\Big[\Big(\tilde v- \bar v_{n-1}-\frac{1}{\gamma^\Sigma_n - \gamma^m_n} \nonumber
    \big(
    \Delta y^m_n +\Delta \hat x_n+\Delta u_n+r^\Sigma_n - r^m_n
    \big)\Big)\Delta y^m_n 
    \,\big|\,\mathcal{F}_{n-1}\vee \sigma(\Delta w_n)\Big]\nonumber\\
&=\Big(d_n- \bar v_{n-1}-\frac{1}{\gamma^\Sigma_n - \gamma^m_n}
    \big(
    \Delta y^m_n +\Delta \hat x_n+\Delta u_n+r^\Sigma_n - r^m_n
    \big)\Big)\Delta y^m_n\nonumber\\
&=\Big(\phi_n(\Delta \hat x_n+\Delta u_n - s_n)
    -\frac{1}{\gamma^\Sigma_n - \gamma^m_n}
    \big(
    \Delta y^m_n +\Delta \hat x_n+\Delta u_n+r^\Sigma_n - r^m_n
    \big)\Big)\Delta y^m_n .\label{value_iterate2}
\end{align}
By the first-order condition, market maker $m$'s optimal order is given by
\begin{equation}\label{MM_optimal}
\Delta \hat y^m_n
= \frac{\phi_n(\gamma^\Sigma_n - \gamma^m_n) -1}{2}(\Delta \hat x_n+\Delta u_n)
-\frac{\phi_n(\gamma^\Sigma_n - \gamma^m_n)}{2}s_n
-\frac{1}{2}(r^\Sigma_n - r^m_n),
\end{equation}
subject to the second-order condition $\frac{1}{\gamma_n^\Sigma - \gamma_n^m}>0$.

{\bf Market clearing}:
Combining the market clearing condition \eqref{market_clearing_condition} with \eqref{IT_solution} and \eqref{MM_optimal} yields
\begin{equation*}
0
=
\Big(\frac{(M-1)\gamma_n^\Sigma\phi_n -M}{2}+1\Big)(\Delta \hat x_n+\Delta u_n)
-\frac{M-1}{2} (\phi_n-\psi_n)\gamma_n^\Sigma s_n.
\end{equation*}
Since this equality must hold for any realization of $\Delta \hat x_n+\Delta u_n$, all coefficients must be zero. Since \eqref{variance_recursion} and \eqref{phi_definition} imply $\phi_n\neq \psi_n$, we obtain
\begin{equation}\label{MM_solution}
\gamma_n^\Sigma =\frac{M-2}{(M-1)\phi_n},
\qquad
s_n = r^\Sigma_n =0.
\end{equation}
Substituting \eqref{perceived_price_mm} and \eqref{MM_solution} into \eqref{disc_mm_strategy}, we obtain
\begin{equation*}
\Delta \hat y_n^m
=
\gamma^m_n(\bar{v}_{n-1} - p_n)+r^m_n
=
-\frac{\gamma_n^m}{\gamma^\Sigma_n - \gamma_n^m}
\big(\Delta \hat y_n^m+\Delta \hat x_n+\Delta u_n - r^m_n \big)
+r^m_n .
\end{equation*}
Solve the above equation with respect to $\Delta \hat y^m_n$ and comparing with \eqref{MM_optimal} and \eqref{MM_solution}, we obtain
\begin{equation}\label{mm_solution_symmetry}
\gamma^m_n
=
\frac{M-2}{M(M-1)\phi_n}
=:\gamma_n,
\qquad
r^m_n =0,
\quad
\text{for } m=1,\dots,M.
\end{equation}

{\bf Verification of the induction step}:
We verify that \eqref{induction_r_k}--\eqref{induction_MM_value} hold for $k=n$. 

\underline{\eqref{induction_r_k} and \eqref{induction_gamma_k} for $k=n$}: These follow directly from \eqref{MM_solution} and  \eqref{mm_solution_symmetry}.

\underline{\eqref{induction_psi_k} for $k=n$}: Substituting \eqref{mm_solution_symmetry} into \eqref{IT_solution}, we obtain
\begin{equation}\label{psi_phi_phi}
    \psi_{n} =\frac{2M-2}{M-2}\phi_{n} = \frac{2M-2}{M-2}\phi_{n+1}.
\end{equation}

\underline{\eqref{induction_Sigma_k} for $k=n$}: Combining \eqref{psi_phi_phi} with \eqref{variance_recursion} and \eqref{phi_definition}, we obtain
\begin{equation}\label{sigma_u_beta_beta}
\sigma_u^2
=
\frac{M}{M-2}\big(\beta_n^2\Sigma_{n-1}\Delta t+\sigma_{z,n}^2\big)
=
\frac{M}{M-2}\big(\beta_{n+1}^2\Sigma_{n}\Delta t+\sigma_{z,n+1}^2\big).
\end{equation}
Substituting \eqref{induction_phi_k} for $k=n+1$, \eqref{psi_phi_phi}, and \eqref{sigma_u_beta_beta} into \eqref{variance_recursion} yields
\[
\Sigma_n
=
\Sigma_{n-1}
-
\frac{4(M-1)^2}{M(M-2)}\phi_{n+1}^2
\sigma_u^2\Delta t =
\Sigma_{n-1}
-
\frac{1}{N-n}\Sigma_n  \quad \Longrightarrow \quad \Sigma_n = \frac{N-n}{N-n+1}\Sigma_{n-1}.
\]

\underline{\eqref{induction_beta_k} for $k=n$}: 
Combining \eqref{psi_phi_phi}, \eqref{phi_definition}, and \eqref{sigma_u_beta_beta}, we obtain 
\[
\beta_n
=
\frac{\Sigma_n}{\Sigma_{n-1}}\beta_{n+1}= \frac{N}{N-n+1}\frac{M-2}{2M-2}\frac{1}{\phi_{n}},
\]
where the last equality is due to \eqref{induction_Sigma_k} for $k=n$, \eqref{induction_beta_k} for $k=n+1$, and $\phi_n=\phi_{n+1}$ from \eqref{psi_phi_phi}.

\underline{\eqref{induction_phi_k} for $k=n$}: This is due to \eqref{induction_Sigma_k} for $k=n$, \eqref{induction_phi_k} for $k=n+1$, and $\phi_n=\phi_{n+1}$.

\underline{\eqref{induction_sigmaz_k} for $k=n$}: 
This is due to \eqref{induction_beta_k}-\eqref{induction_phi_k} for $k=n$ and \eqref{sigma_u_beta_beta}.

\underline{\eqref{induction_IT_value} for $k=n$}: Substituting \eqref{disc_informed_strategy} and \eqref{MM_solution} into \eqref{value_iterate}, we obtain
\begin{align*}
&\mathbb{E}\Big[\sum_{k=n}^N(\tilde v - p_k)\Delta x_k
\;\big|\; \mathcal{F}_{n-1}\vee \sigma(\tilde v)\Big] \nonumber\\
&=\mathbb{E}\Big[ \Big(\tilde v  - \bar v_{n-1}  
- \frac{1}{\gamma^\Sigma_n}\Delta \hat x_n\Big)\Delta \hat x_n 
+ \frac{M\gamma_{n+1}}{4}\big(\tilde v- \bar{v}_{n-1} - \psi_n\Delta \hat x_n)\big)^2 \;\big|\; \mathcal{F}_{n-1}\vee \sigma(\tilde v)\Big] \\
&=\Big( \beta_n \Delta t \Big( 1- \frac{\beta_n \Delta t}{\gamma_n^\Sigma}  \Big) + \frac{M\gamma_{n+1}}{4} (1-\psi_n \beta_n\Delta t)^2 \Big) (\tv - \bv_{n-1})^2 + \Big(  - \frac{1}{\gamma_n^\Sigma}  + \frac{M \gamma_{n+1} \psi_n^2}{4}\Big) \sigma_{z,n}^2 \Delta t\\
&=\frac{M \gamma_n}{4}(\tv - \bv_{n-1})^2,
\end{align*}
where the last equality is obtained by combining \eqref{induction_gamma_k} for $k=n,n+1$, \eqref{induction_psi_k} for $k=n$, and \eqref{MM_solution}.

\underline{\eqref{induction_MM_value} for $k=n$}: We obtain $\Delta \hat y^m_n=-\frac{1}{M}(\Delta \hat x_n + \Delta u_n)$ from \eqref{MM_optimal}-\eqref{mm_solution_symmetry}. Then, \eqref{induction_MM_value} for $k=n+1$, \eqref{disc_informed_strategy}, \eqref{value_iterate2}, \eqref{MM_solution}, and \eqref{mm_solution_symmetry} yield
\begin{align*}
\mathbb{E}\Big[\sum_{j=n}^N(\tilde v - p_j)\Delta \hat y^m_j 
\Big| \mathcal{F}_{n-1}\Big] &= \mathbb{E}\big[(\tilde v - p_n)\Delta \hat y^m_n 
\big| \mathcal{F}_{n-1}\big]+ \nu_n\\
&=\frac{1}{M^2(M-1)\gamma_n} \, \mathbb{E}\big[  (\Delta \hat x_n+ \Delta u_n)^2 \big| \mathcal{F}_{n-1}\big] + \nu_n\\
&=\frac{\beta_n \Sigma_{n-1} \Delta t}{M^2(M-1)\gamma_n \phi_n} + \nu_n= \frac{\beta_n \Sigma_{n-1} \Delta t}{M(M-2)}+ \nu_n,
\end{align*}
where we use \eqref{phi_definition} for the third equality.

To complete the proof, we observe that \eqref{induction_Sigma_k} and \eqref{induction_phi_k}, together with $\Sigma_0=\sigma_v^2$, imply
\begin{align}
\phi_n  =\sqrt{\frac{M(M-2)}{(2M-2)^2}\frac{\sigma_v^2}{\sigma_u^2}}\quad \textrm{and} \quad\Sigma_n = \frac{N-n}{N} \sigma_v^2 \quad \textrm{for $n=1,2,...,N$.} \label{phi_Simga_final}
\end{align}
We substitute these into \eqref{induction_gamma_k}-\eqref{induction_beta_k} and \eqref{induction_IT_value}-\eqref{induction_MM_value} to obtain the expressions in the theorem.

\bigskip

\section{Price Autocorrelation}\label{app:price_autocorr}
In this appendix, we compute the autocorrelation of equilibrium price changes reported in Table~\ref{tab:model_comparison}. 
Unless stated otherwise, all equilibrium relationships used in the derivations are taken from Theorem~\ref{thm:equilibrium}.

Since the process $\{\bar v_n\}_{n=0}^N$ is an $(\mathcal F_n)_{n=0}^N$-martingale, it follows that
\begin{equation}
\begin{split}\label{bv_mart}
    \mathbb{E}[\Delta \bv_n \Delta \bv_m] &= 0 \quad \textrm{for}\quad 1\leq n \neq m \leq N,\\
    \mathbb{E}[(\Delta \bv_n)^2] &= \mathbb{E}[(\tv-\bv_{n-1})^2- (\tv-\bv_{n})^2- 2\Delta \bv_n (\tv-\bv_{n})] = \Sigma_{n-1}-\Sigma_{n} \\
    &= \sigma_v^2 \Delta t \quad \textrm{for}\quad 1\leq n\leq N,
\end{split}
\end{equation}
where the last equality follows from \eqref{phi_Simga_final}. 

For fixed $1\leq n, m \leq N$, suppose that $\mathbb{E}[\bv_{n-1} \Delta u_m]=0$. Then, 
\begin{align*}
\mathbb{E}[\bv_{n} \Delta u_m]&=\mathbb{E}[\big(\bv_{n-1} + \beta_{n}(\tv - \bv_{n-1})+\Delta z_{n}\big) \Delta u_m] =0,
\end{align*}
where the last equality follows from the independence of $\Delta u_m$ from $\tv$ and $\Delta z_{n}$. Since $\bv_0=0$, mathematical induction ensures that $\mathbb{E}[\bv_{n} \Delta u_m]=0$ for all $n,m$. In particular, this implies
\begin{align}
\mathbb{E}[\Delta \bv_{n} \Delta u_m]&=0 \quad \textrm{for}\quad 1\leq n, m \leq N. 
\end{align}

We substitute the result from \thmref{thm:equilibrium} into \eqref{perceived_price_informed} to obtain the equilibrium price:
\begin{align}
p_n &= \bar v_{n-1}+\frac{1}{M\gamma_n}\Delta w_n = \bar v_{n-1}+\frac{\psi_n}{2}(\Delta x_n +\Delta u_n)= \bar v_{n-1}+\frac{1}{2}\Delta \bar v_n +\frac{\psi_n}{2}\Delta u_n \nonumber\\
&= \frac{1}{2}\big( \bar v_{n-1}+\bar v_n +\psi \Delta u_n \big),
\quad \textrm{where}\quad \psi := \psi_n= \sqrt{\tfrac{M}{M-2}}\tfrac{\sigma_v}{\sigma_u} \textrm{   (no $n$-dependence).}
 \label{equilibrium_p}
\end{align}
Combining \eqref{bv_mart}-\eqref{equilibrium_p}, we obtain
\begin{align*}
\mathbb{E}[(\Delta p_n)^2] &=\frac{1}{4}\Big( \mathbb{E}[(\Delta \bv_{n-1})^2] + \mathbb{E}[(\Delta \bv_{n})^2] + 2\psi^2 \sigma_u^2 \Delta t   \Big) = \frac{M-1}{M-2} \sigma_v^2 \Delta t,\\
\mathbb{E}[\Delta p_n \Delta p_{n+1}]&=\frac{1}{4}\Big(  \mathbb{E}[(\Delta \bv_{n})^2] -\psi^2 \sigma_u^2 \Delta t   \Big) = - \frac{1}{2(M-2)} \sigma_v^2 \Delta t.
\end{align*}
Since all random variables have mean zero, the autocorrelation is given by
\begin{align*}
   \mathrm{Corr}(\Delta p_n, \Delta p_{n+1})
    = \frac{\mathrm{Cov}(\Delta p_n, \Delta p_{n+1})}{\sqrt{\mathrm{Var}(\Delta p_n)\mathrm{Var}(\Delta p_{n+1})}}
    = -\frac{1}{2(M-1)}.
\end{align*}

\section{Cross-Country Evidence on Disclosure and Competition}\label{app:cross_country}

In Section~\ref{sec:empirical}, we test our central prediction using CRSP market maker counts as a direct measure of competition among U.S.\ stocks. In this appendix, we provide complementary evidence by reinterpreting cross-country findings from two existing studies within our theoretical framework. Because these studies were not designed to test our model, the proxy mappings described below are necessarily indirect; we discuss their limitations alongside the analysis.

\subsection*{Evidence from \cite{FGH2006}}

\cite{FGH2006} examine cross-sectional associations between stock exchange disclosure systems and market development across 50 international exchanges. They document a positive association between the strength of disclosure and market quality, showing that exchanges with stronger disclosure tend to exhibit higher levels of market development and trading activity. To refine this observation within our theoretical framework, we utilize three key variables reported in Table 2 of \cite{FGH2006}. First, we take $\mathit{MinorityCap}$ from Panel A, defined as the \emph{stock market capitalization held by minorities deflated by GDP}, as a proxy for the degree of market-making competition $M$. This choice rests on the economic intuition that larger, more developed financial markets with broader investor participation foster a more competitive environment for financial intermediaries, supporting a greater number of active market makers.
We acknowledge that $\mathit{MinorityCap}$ is a broad measure of market development rather than a direct count of market makers. In principle, the correlation between $\mathit{MinorityCap}$ and market quality could also reflect differences in regulatory environments, investor sophistication, or other institutional features that covary with market size. Nevertheless, to the extent that more developed markets support a larger and more competitive intermediary sector, the mapping to our competition parameter $M$ is economically plausible.

Several considerations support the use of $\mathit{MinorityCap}$ as a proxy for intermediary competition. First, among the five components of the market development index in \cite{FGH2006}, $\mathit{MinorityCap}$ is the variable most closely tied to investor participation breadth: it measures the equity stake available to dispersed outside investors, which is a necessary condition for sustaining a large and active intermediary sector. Markets in which minority shareholders hold a substantial share of capitalization are precisely those in which trading demand is dispersed across many participants, creating profit opportunities that attract competing market makers. Second, this variable directly descends from the minority shareholder protection measures of \cite{LLSV1998}, which have been widely used in the law-and-finance literature to capture the breadth of capital-market participation. \cite{LLSV1998} show that countries with stronger legal protection of minority investors have larger and more liquid equity markets, an environment that naturally supports a more competitive financial intermediary sector.

To evaluate how the impact of disclosure varies with this competition proxy, we define a disclosure-adjusted measure of trading activity for each exchange as
\[
\mathit{TradeActivity}^{D} := \frac{\mathit{Trans}}{D+1},
\]
where $\mathit{Trans}$ denotes the \emph{annual number of transactions in equity shares deflated by year-end market capitalization} from Table 2, Panel A of \cite{FGH2006} and $D$ denotes the \emph{overall disclosure} measure from Table 2, Panel B of \cite{FGH2006}.\footnote{We restrict attention to the 41 exchanges for which both $\mathit{Trans}$ and $D$ are jointly available, representing the maximum available sample. The term $D+1$ ensures a strictly positive denominator, as the standardized measure $D$ can take negative values with a minimum of $-0.68$.} This ratio, $\mathit{TradeActivity}^D$, serves as a conceptual measure of ``liquidity efficiency,'' representing the amount of trading activity normalized by disclosure intensity.

\renewcommand{\thefigure}{C.1}
\begin{figure}[t]
\centering
\includegraphics[width=0.5\linewidth]{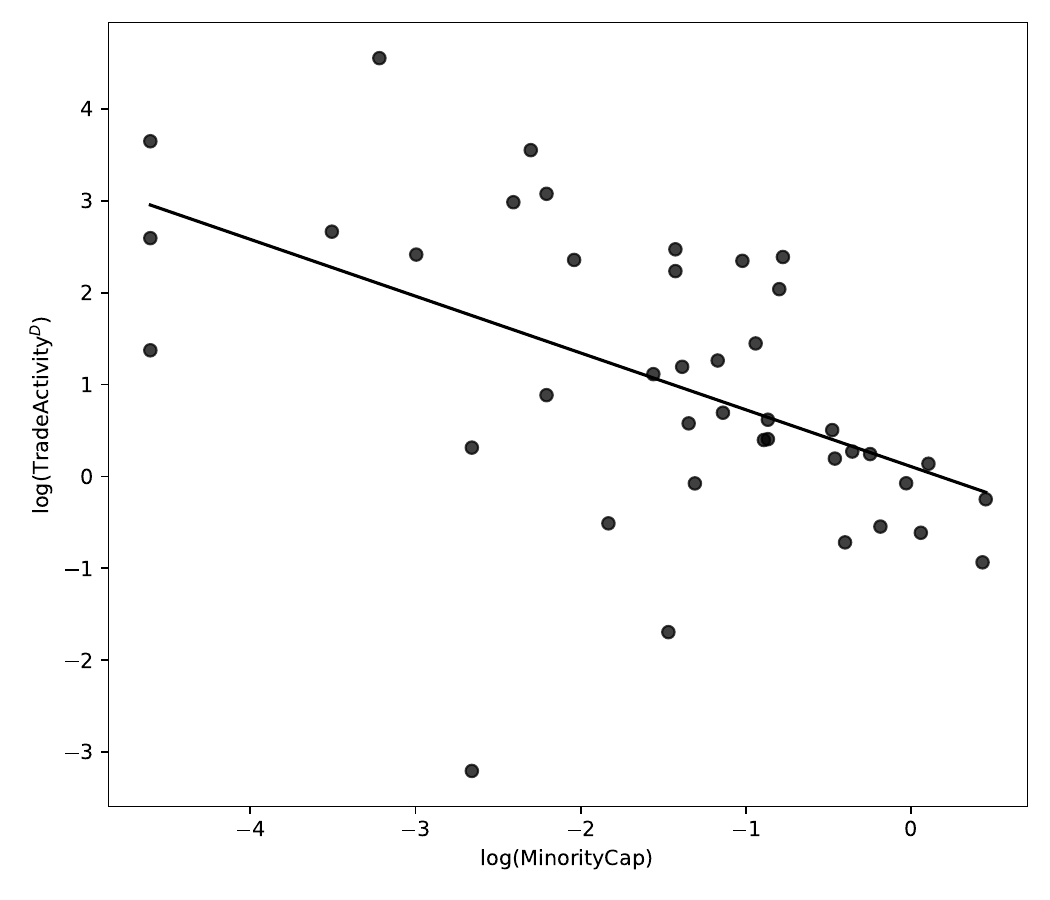}
\caption{
Log--log plot of disclosure-adjusted trading activity ($\mathit{TradeActivity}^D$) against market competition proxy ($\mathit{MinorityCap}$). The fitted line exhibits a negative relationship (slope = $-0.62$, correlation = $-0.52$), which suggests that the liquidity-enhancing effect of disclosure is more pronounced in less competitive markets. Data are based on a sample of 41 international exchanges from \cite{FGH2006}.
}
\label{fig:disclosure_trading}
\end{figure}

Figure~\ref{fig:disclosure_trading} presents the $\log$--$\log$ relationship between $\mathit{TradeActivity}^D$ and $\mathit{MinorityCap}$. The fitted line exhibits a negative slope of $-0.62$ and a correlation of $-0.52$. Under our interpretation, this negative cross-sectional association is consistent with the predictions in \proref{cor:disclosure_impact}. Specifically, the higher values of $\mathit{TradeActivity}^D$ observed in smaller, less developed markets (low $\mathit{MinorityCap}$) suggest that disclosure provides a more substantial reduction in market frictions when competition among market makers is weak. As competition intensifies (high $M$), this marginal benefit diminishes, which is reflected in the lower levels of disclosure efficiency observed across more developed exchanges.

\subsection*{Evidence from \cite{LLM2012}}

\cite{LLM2012} examine the relation between firm-level transparency and stock market liquidity in an international setting, focusing on how this relation varies across institutional environments. Their key empirical results are summarized in Table 5, Panel B of their study, which examines how the sensitivity of aggregate stock market illiquidity to firm-level transparency varies across country-level governance structures. We report the relevant coefficients from this panel in Table~\ref{tab:LLM_panelB}.
In Table 5, Panel B of \cite{LLM2012}, market quality is captured by the variable $\mathit{ILLIQ}$, which is an aggregate measure of illiquidity constructed from bid--ask spreads and the proportion of zero-return trading days. Firm-level transparency is measured by $\mathit{TRANS}$, a composite index summarizing multiple dimensions of financial reporting quality. Country-level institutional environments are summarized by $\mathit{GOV\_SCORE}$, a governance measure reported in Table 1 of \cite{LLM2012} that incorporates factors such as investor protection and media penetration. This score classifies countries into four groups, ranging from 0 for weak governance to 3 for strong governance.

\renewcommand{\thetable}{C.1}
\begin{table}[t]
\centering
\caption{Estimated coefficients of $\mathit{TRANS}$ on $\mathit{ILLIQ}$ across governance groups}
\label{tab:LLM_panelB}
\begin{center}
\begin{tabular}{lcccc}
\toprule
 & \multicolumn{4}{c}{$\mathit{GOV\_SCORE}$} \\
\cmidrule(lr){2-5}
Variable
& (0)
& (1)
& (2)
& (3) \\
\midrule
$\mathit{TRANS}$
& $-0.317$
& $-0.194$
& $-0.125$
& $-0.117$ \\
\bottomrule
\end{tabular}
\end{center}
\vspace{0.5em}
\parbox{\textwidth}
{    \scriptsize
    \textit{Note}: This table reports the estimated coefficients of firm-level transparency ($\mathit{TRANS}$) on aggregate illiquidity ($\mathit{ILLIQ}$) from Table 5, Panel B of \cite{LLM2012}. $\mathit{GOV\_SCORE}$ ranges from 0 (weak governance) to 3 (strong governance).}
\end{table}

We interpret $\mathit{GOV\_SCORE}$ as an empirical proxy for the degree of effective market-making competition $M$. This interpretation rests on the economic intuition that stronger institutional environments are associated with lower informational rents and fewer barriers to entry for financial intermediaries, thereby fostering a more competitive market-making sector. Conversely, weak governance environments, characterized by low values of $\mathit{GOV\_SCORE}$, are more likely to reflect settings with limited competition among market makers.
As with $\mathit{MinorityCap}$, this mapping is indirect. Governance quality simultaneously captures investor protection, legal enforcement, and media scrutiny, all of which may independently affect liquidity through channels other than market-maker competition. Our interpretation therefore isolates one plausible mechanism among several. The monotonic pattern documented below is consistent with our model but does not rule out alternative explanations.

Nonetheless, several features of $\mathit{GOV\_SCORE}$ make it a reasonable proxy for our purposes. First, $\mathit{ASDI}$, the anti-self-dealing index from \cite{DPLS2008}, captures the legal protection of outside investors against expropriation by insiders. In countries where such protection is weak, concentrated ownership and insider control are prevalent, reducing the scope for arm's-length trading and thereby limiting the competitive market-making sector. Second, $\mathit{MEDIA}$, which measures print and television penetration, proxies for the richness of the information environment available to market participants. Richer information environments lower the informational advantage of better-informed traders, which in turn reduces the adverse-selection costs facing market makers and facilitates entry. Taken together, these components of $\mathit{GOV\_SCORE}$ are economically linked---through complementary channels---to the conditions that sustain a competitive intermediary sector. A similar logic underlies the subsample analysis in \cite{LLM2012} themselves, who note that firm-level transparency is a substitute for weak country-level institutions, consistent with our prediction that disclosure matters most precisely when the competitive discipline among intermediaries is weakest.

Under this mapping, the empirical evidence in Table~\ref{tab:LLM_panelB} is qualitatively consistent with our theoretical prediction. The regression coefficients represent the marginal impact of transparency on illiquidity; notably, this effect is strongest in magnitude for the low-$\mathit{GOV\_SCORE}$ group ($-0.317$) and monotonically diminishes as governance improves (reaching $-0.117$ for Group 3). These results support the implication of \proref{cor:disclosure_impact} that the marginal liquidity benefit of disclosure is greater when market-making competition is weaker.



\end{document}